\documentclass[journal]{IEEEtai}

\usepackage{graphicx}
\usepackage{amsmath,amssymb}
\usepackage{booktabs}
\usepackage{color,array}
\usepackage{listings}
\usepackage{xcolor}
\usepackage{algorithm}
\usepackage{algpseudocode}
\usepackage{tikz}
\usetikzlibrary{positioning,arrows.meta,fit,backgrounds}
\usepackage{url}
\usepackage[colorlinks,urlcolor=blue,linkcolor=blue,citecolor=blue]{hyperref}

\newtheorem{lemma}{Lemma}

\begin{document}

\title{HELIOS: Guardrailed LLM-Driven Evolution of Autonomous Resource
Orchestration Policies for Multi-Cloud Distributed Systems}

\author{Guanyu Ding and Ying Wang
\thanks{G. Ding is with New York University, New York, NY, USA.}
\thanks{Y. Wang is with Pepperdine University, Malibu, CA, USA.}}

\markboth{IEEE Transactions on Artificial Intelligence, Vol.~00, No.~0, Month~2026}%
{Ding \MakeLowercase{\textit{et al.}}: Guardrailed LLM-Driven Multi-Cloud Orchestration}

\maketitle

\begin{abstract}
Operating latency-sensitive services across multiple public clouds exposes an
optimization surface that no single provider's autoscaler sees: on-demand
prices that differ per vCPU across providers, preemptible (spot) capacity
whose discount and interruption risk vary by provider and instance type,
egress fees that penalize state movement, and provider-level failures that
single-cloud deployments cannot survive. Large language models (LLMs) are
attractive orchestrators for this surface because they can synthesize
non-trivial decision logic from a natural-language description of the
environment, but invoking an LLM on the critical path of every scheduling
decision is economically and operationally infeasible: we estimate that
per-decision inference for a modest 200-service fleet would cost more than
the cloud capacity it manages and add multi-second latency to a
millisecond-scale control loop. HELIOS resolves this bottleneck by moving
the LLM off the critical path: the LLM \emph{evolves executable
orchestration policies}---small Python programs over a fixed feature
interface---against a trace-driven multi-cloud simulator, and only the
champion program runs in production, wrapped in a formally specified
guardrail layer that enforces capacity feasibility, SLO-class placement
rules, and churn budgets regardless of what the evolved code proposes. On
real workload traces (PlanetLab, Bitbrains, Azure) combined with real 2026
price and interruption-frequency data from AWS, Azure, and GCP, the evolved
policy reduces penalized operating cost by 45\% versus a single-cloud
best-fit baseline and by 9--19\% versus calibrated multi-cloud heuristics,
matches an oracle-informed MILP planner's premium SLO at 40\% lower cost,
and outperforms a DQN meta-controller trained with $2.6\times$ more
environment interactions. Guardrails prove essential: without them, the same
policy family degrades to 97--98\% premium-tier downtime under a
$10\times$ spot-hazard stress test, versus 0.17\% when guarded. We release
the simulator, all evolved policies, the complete LLM prompt/response
archive, and per-generation fitness logs for full reproducibility.
\end{abstract}

\begin{IEEEImpStatement}
Cloud users increasingly run services across multiple providers to exploit
price and reliability differences, but no existing autoscaler reasons across
clouds, and placing a large language model on the path of every scheduling
decision is far too slow and costly to deploy. This work demonstrates a
practical alternative: a language model is used offline to synthesize compact
orchestration programs, and only the best program runs in production, wrapped
in a safety layer that provably bounds its behavior. On real workload and
pricing data the resulting controller substantially reduces operating cost
against strong baselines while preserving service-level guarantees, and stays
stable during provider outages and price shocks where an optimal planner
becomes brittle. The approach lets operators capture the benefits of
LLM-generated decision logic---auditability and low runtime cost---without
ceding real-time control to an opaque model, offering a template for safely
deploying learned policies in production infrastructure.
\end{IEEEImpStatement}

\begin{IEEEkeywords}
Evolutionary program synthesis, guardrails, large language models, multi-cloud
orchestration, resource management, spot instances.
\end{IEEEkeywords}

\section{Introduction}\label{sec:intro}

\IEEEPARstart{P}{ublic}-cloud pricing is not one market but several. At the time of writing,
the cheapest general-purpose on-demand vCPU differs by only a few percent
across AWS, Azure, and GCP, but \emph{preemptible} capacity diverges
sharply: an Azure eastus D-series spot vCPU costs 89\% less than its
on-demand price, GCP spot discounts run 75--85\%, and AWS spot discounts
and interruption frequencies vary per instance type from ``less than 5\%
per month'' to ``more than 20\%''~\cite{awsspotadvisor,skypilotcatalog}.
Meanwhile cross-cloud egress fees (\$0.087--0.12 per GB) tax any strategy
that moves service state between providers, and provider-level incidents
periodically remove an entire cloud from the feasible set. A workload that
can place, migrate, and re-place its services across this surface---while
honoring service-level objectives (SLOs) that differ between a premium tier
and a standard tier---has a large and continuously shifting cost advantage
over any single-cloud deployment. Sky-computing systems such as
SkyPilot~\cite{yang2023skypilot} and Skyplane~\cite{jain2023skyplane} have
built the mechanisms for such portability; the open question we address is
the \emph{policy}: who writes the decision logic that continuously
exploits this heterogeneity, and how do we trust it?

Large language models are an appealing answer. They ingest messy,
semi-structured context (price catalogs, interruption advisories, SLO
tiers) and emit executable logic, and recent work shows they can
\emph{discover} algorithms that beat hand-written baselines when paired
with a systematic evaluator~\cite{romeraparedes2024funsearch,liu2024eoh,%
novikov2025alphaevolve}. The obvious architecture---call the LLM for every
placement and migration decision---is however blocked by a bottleneck that
we quantify in Section~\ref{sec:bottleneck}: at a 5-minute control period
for a modest 200-service fleet, per-decision inference would consume on the
order of $10^8$ tokens per day, costing more than the managed capacity
itself, and would insert seconds of tail latency into a control loop whose
existing decision cost we measure at $\sim$2\,ms per step. LLM-on-the-path
is the wrong design point for high-frequency resource control.

HELIOS (\emph{Heuristic Evolution with LLM-Informed Orchestration
Synthesis}) instead uses the LLM as an \emph{offline policy synthesizer}.
The LLM receives a precise specification of a small policy interface (four
functions over engineered features), a factual digest of the real
multi-cloud catalog, and---after the first generation---measured fitness
tables and metric breakdowns of all prior candidate policies. It responds
with new candidate \emph{programs}: mutations, crossovers, and redesigns.
A trace-driven simulator built on real workload traces and real 2026
pricing measures every candidate's fitness, and the loop repeats until
convergence. The champion program then runs unmodified inside a
\emph{guardrail layer} that makes five safety properties
(Section~\ref{sec:guard}) hold by construction, independent of the evolved
code: reservation-feasible packing, premium-tier isolation from
preemptible capacity, a per-step migration churn budget, per-service
residency and cooldown, and projection of infeasible proposals. The
guardrails matter for more than peace of mind: our evolution log contains
two candidate policies whose plausible-looking consolidation logic caused
catastrophic migration storms (61\% SLO violation in the worst case), which
the guardrails contained to a survivable---and measurable, hence
selectable-against---failure.

We make four contributions:

\begin{enumerate}
\item \textbf{Problem formalization and bottleneck analysis
(Sections~\ref{sec:problem}--\ref{sec:bottleneck}).} We formalize
SLO-tiered multi-cloud orchestration as an online optimization over a
heterogeneous instance catalog with preemption hazards and egress costs,
and we quantify why per-decision LLM inference is infeasible for this
problem class, motivating amortized policy synthesis.
\item \textbf{LLM-driven evolutionary policy synthesis
(Section~\ref{sec:evolution}).} A generation-based loop in which the LLM
writes Python policy programs against a fixed feature interface and a
measured-fitness archive. Across 5 generations and 19 candidate programs
(38 simulated episodes), evolution improved fitness by 20.9\% over the
hand-written seed heuristic, discovering three transferable design rules
(premium-first placement ordering, exposure-minimizing ``big-box''
spawning, per-instance amortized interruption pricing) and---equally
informative---rejecting three plausible gene families that a practitioner
might have shipped (actual-utilization draining, anchor-free premium
packing, adaptive headroom).
\item \textbf{A guardrail layer with a safety contract
(Section~\ref{sec:guard}).} Five mechanically enforced invariants that hold
for \emph{any} policy program, with a short proof by construction, turning
LLM-generated code from a trust liability into a bounded search space. A
hazard stress test quantifies their value: at $10\times$ calibrated spot
hazards the unguarded policy family collapses to 97--98\% premium
downtime; the guarded system degrades gracefully (premium violations
0.17\%, cost $+100\%$ as it correctly abandons spot capacity).
\item \textbf{Evaluation on real traces and real prices
(Sections~\ref{sec:setup}--\ref{sec:results}).} Three real workload
datasets (PlanetLab, Bitbrains, Azure) crossed with a 24-type,
three-provider catalog built from retrieved 2026 price and
interruption-frequency data. Against six baselines---single-cloud best-fit,
threshold autoscaling, calibrated multi-cloud greedy, the hand-written
seed, an oracle-informed MILP planner, and a DQN meta-controller---the
evolved policy attains the best penalized cost on all three trace suites,
including a memory-dimensioned trace unseen during evolution.
\end{enumerate}

Everything reported here is measured: the paper's tables are generated
programmatically from the experiment CSVs, the complete evolution archive
(prompts, LLM analyses, all 19 candidate programs, per-candidate fitness)
ships with the artifact, and Section~\ref{sec:limits} discloses every
modeling assumption, including the ones we consider most fragile.

\section{Related work}\label{sec:related}

\paragraph{Cluster resource management.} Production schedulers such as
Borg~\cite{verma2015borg}, Omega~\cite{schwarzkopf2013omega},
Mesos~\cite{hindman2011mesos}, Kubernetes~\cite{burns2016borgomega}, and
Azure's Protean~\cite{hadary2020protean} established the architecture we
inherit---declarative requests, feasibility-checked placement, and a strict
separation of policy from mechanism (explicit in
Protean~\cite{hadary2020protean})---but they optimize within one
administrative domain and one price. Autopilot~\cite{rzadca2020autopilot}
and Resource Central~\cite{cortez2017resourcecentral} automate request
sizing and behavior prediction inside a provider; HELIOS treats
provider-facing prices, discounts, and hazards as the decision surface
itself. Our guardrail G1 packs on reservations sized from profiled p99
demand, mirroring the request-based packing these systems
use~\cite{verma2015borg,rzadca2020autopilot}.

\paragraph{Multi-cloud and sky computing.} The intercloud broker
vision~\cite{yang2023skypilot} and cross-cloud transfer
optimization~\cite{jain2023skyplane} supply the mechanisms HELIOS assumes:
uniform provisioning APIs and priced data movement. SkyPilot's policy layer
selects a cloud per job at submission time; HELIOS addresses the
complementary continuous-control problem of an always-on service fleet
that migrates as conditions change. Our catalog is built directly from the
SkyPilot project's published cross-cloud price
data~\cite{skypilotcatalog}.

\paragraph{Spot markets and transient servers.} Spot economics and
preemption behavior have been studied since Amazon's original
market~\cite{benyehuda2013spot}. Systems such as
SpotOn~\cite{subramanya2015spoton} and HotSpot~\cite{shastri2017hotspot}
exploit price differences by checkpointing or hopping between spot
servers, and recent policy work optimizes spot usage under
deadlines~\cite{wu2024cantbelate}. These target batch or single-tenant
workloads inside one provider; HELIOS manages a mixed SLO-tier service
fleet across providers, where premium isolation (our G2) and egress-aware
migration dominate the design.

\paragraph{Learning-based schedulers.} DeepRM~\cite{mao2016deeprm} and
Decima~\cite{mao2019decima} showed deep RL can learn scheduling policies;
we include a DQN~\cite{mnih2015dqn} meta-controller baseline that controls
the same knob space our evolution searches. Consistent with known
sample-efficiency and credit-assignment difficulties, it fails to match the
evolved program under a comparable interaction budget
(Section~\ref{sec:results-main}), and its learned risk-aversion collapses
to an all-on-demand policy. Classical VM-consolidation heuristics with
migration~\cite{beloglazov2012optimal} inform our baseline set and our
PlanetLab data choice.

\paragraph{LLM-driven algorithm discovery.} FunSearch paired an LLM with a
systematic evaluator to discover programs that surpass known
constructions~\cite{romeraparedes2024funsearch}; EoH~\cite{liu2024eoh} and
ReEvo~\cite{ye2024reevo} evolve heuristics for combinatorial optimization;
Eureka evolves reward functions~\cite{ma2024eureka}; OPRO treats the LLM
itself as an optimizer~\cite{yang2024opro}; and AlphaEvolve extended the
paradigm to production-relevant code, including a datacenter scheduling
component~\cite{novikov2025alphaevolve}. HELIOS instantiates this paradigm
for online multi-cloud orchestration and contributes what these frameworks
leave open when the evolved artifact must \emph{operate} a live system: a
domain-specific policy interface with engineered features, a guardrail
contract that bounds the blast radius of any generated program, and an
evaluation of robustness (price shocks, provider outages, hazard shifts)
rather than static benchmark scores alone.

\section{Problem formulation}\label{sec:problem}

\paragraph{Services.} A fleet of $N$ long-running services indexed by $s$
must each be placed on exactly one cloud instance at every control step
$t \in \{0,\dots,T{-}1\}$ (step length $\Delta = 5$ minutes). Service $s$
has a time-varying resource demand $(c_s(t), m_s(t))$ in vCPU and GiB,
drawn from a real trace; a deploy-time \emph{reservation}
$(R^c_s, R^m_s)$ set to the profiled 99th percentile of its demand
(mirroring conservative production request-setting
practice~\cite{rzadca2020autopilot}); a migration state size $g_s$ in GB;
and an SLO tier $\pi_s \in \{\text{premium}, \text{standard}\}$
($\Pr[\text{premium}]=0.3$).

\paragraph{Catalog.} The instance catalog $\mathcal{K}$ contains 24 types
across three providers $P=\{\text{aws},\text{azure},\text{gcp}\}$, each
with capacity $(C_k, M_k)$, an on-demand price $p^{od}_k$ \$/h, and where
offered, a spot price $p^{sp}_k$ with a per-step interruption hazard
$h_k$. Hazards derive from mean lifetimes: for AWS we map the Spot Instance
Advisor's published per-type interruption-frequency
buckets~\cite{awsspotadvisor} to mean lifetimes (30, 15, 7, 3, 1 days for
the five buckets); for Azure and GCP, which publish no equivalent
advisory, we \emph{assume} 7-day and 5-day mean lifetimes respectively and
sweep this assumption by up to $10\times$ in
Sections~\ref{sec:results-guard} and~\ref{sec:results-sens}. An
interruption kills the instance and every service on it. Cross-provider
migration of service $s$ incurs egress cost $g_s \cdot e_{P(src)}$ with
$e = (\$0.09, \$0.087, \$0.12)$ per GB for AWS, Azure, GCP respectively,
and 2 steps of downtime for the moved service; instance boot takes 1 step;
interrupted services restart after 1 step plus boot.

\paragraph{Objective.} A service-step is \emph{violated} if the service is
down (pending, booting, migrating, or restarting) or its host's actual
demand exceeds capacity in either dimension. Writing $V^{std}$ and
$V^{prem}$ for violated standard and premium service-steps and
$\mathrm{Cost}$ for the sum of on-demand, spot, and egress charges over
the episode, the penalized cost is
\begin{equation}
J \;=\; \mathrm{Cost} \;+\; \beta_{std}\, V^{std} \;+\;
\beta_{prem}\, V^{prem},
\label{eq:J}
\end{equation}
with $\beta_{std} = \$0.5$ and $\beta_{prem} = \$5$.
The $10\times$ asymmetry encodes SLO-credit asymmetry between tiers; it is
the single scalarization used for evolution fitness and headline
comparisons, and we always report its components (cost, per-tier violation
rates) separately so readers can re-weigh them.

\section{The bottleneck: LLMs cannot sit on the decision path}
\label{sec:bottleneck}

Consider the naive architecture: at each step, for each service requiring
action, serialize the fleet context and candidate set into a prompt and let
an LLM choose. The measured decision context in our system (service
features, $\sim$70 candidate descriptors, fleet aggregates) serializes to
3--6\,k tokens. With $N{=}200$ services re-evaluated every 5 minutes, that
is $200 \times 288 = 57{,}600$ decision instances per day, i.e.\
$\sim 2\text{--}3.5 \times 10^{8}$ input tokens daily---at current
frontier-model prices on the order of \$500--1{,}000 per day of inference
for a fleet whose entire measured capacity bill is \$240--520 per day
(Table~\ref{tab:main}). Latency compounds the problem: a single LLM call's
seconds-scale tail would dominate a control loop whose complete per-step
decision cost we measure at 1.9\,ms for $N{=}200$
(Table~\ref{tab:overhead}), and interposing it on interruption-recovery
paths would directly extend premium downtime. Batching and distillation
shrink but do not remove the gap, and they surrender the transparency of
an auditable policy.

The conclusion is architectural: the LLM's contribution must be
\emph{amortized}. HELIOS therefore spends its entire LLM budget offline---%
in our runs, 5 synthesis rounds---producing a $\sim$40-line auditable
program whose marginal decision cost is microseconds. The same amortization
argument underlies FunSearch-style
discovery~\cite{romeraparedes2024funsearch}; our contribution is carrying
it into a safety-constrained online control setting.

\section{HELIOS design}\label{sec:design}

\begin{figure*}[t]
\centering
\begin{tikzpicture}[font=\scriptsize, node distance=4mm and 6mm,
  box/.style={draw, rounded corners=1pt, align=center, inner sep=3pt,
              minimum height=6mm},
  arr/.style={-{Stealth[length=2mm]}, thick}]
\node[box, fill=orange!12] (llm) {LLM policy writer\\(offline, generation $g$)};
\node[box, right=9mm of llm, fill=blue!8] (cands) {candidate programs\\\texttt{knobs, priority,}\\\texttt{score, migrate\_urgency}};
\node[box, right=9mm of cands, fill=green!8] (sim) {trace-driven simulator\\real traces $\times$ real catalog};
\node[box, below=of sim, fill=green!8] (fit) {fitness $J$ + metric\\breakdowns (archive)};
\node[box, below=of cands, fill=blue!8] (champ) {champion policy\\(deployed program)};
\node[box, below=of champ, fill=red!10] (guard) {guardrail layer G1--G5\\feasibility, tier isolation,\\churn budget, projection};
\node[box, left=9mm of guard, fill=gray!10] (ctx) {context compiler\\prices, hazards, fleet state\\$\to$ feature dicts};
\node[box, right=9mm of guard, fill=gray!10] (clouds) {AWS / Azure / GCP\\instances, spot markets};
\draw[arr] (llm) -- (cands);
\draw[arr] (cands) -- (sim);
\draw[arr] (sim) -- (fit);
\draw[arr] (fit.west) -| ([xshift=-4mm]llm.south) |- (llm.west);
\draw[arr] (cands) -- node[right]{best of run} (champ);
\draw[arr] (champ) -- (guard);
\draw[arr] (ctx) -- (guard);
\draw[arr] (guard) -- (clouds);
\draw[arr] ([yshift=-1.5mm]clouds.west) -- ([yshift=-1.5mm]guard.east);
\end{tikzpicture}
\caption{HELIOS architecture. The LLM operates only in the offline
evolution loop (top); the deployed artifact is the champion program running
inside the guardrail layer (bottom), fed by a context compiler that reduces
raw cloud state to the fixed feature interface.}
\label{fig:arch}
\end{figure*}

Figure~\ref{fig:arch} shows the system. Three components matter: the policy
interface that the LLM programs against, the evolution loop that selects
programs by measured fitness, and the guardrail layer that constrains what
any program can do.

\subsection{Policy interface and context compiler}\label{sec:interface}

A policy is a Python class with four pure functions:
\texttt{knobs(ctx)} returns global knobs (currently a placement headroom
fraction in $[0,0.6]$); \texttt{priority(sv, ctx)} orders pending services
for placement within a step; \texttt{score(sv, cand, ctx)} ranks feasible
placement targets; and \texttt{migrate\_urgency(sv, host, ctx)} proposes
(positive return) and prioritizes migrations. The \emph{context compiler}
reduces raw state to three feature dictionaries: per-service \texttt{sv}
(demand, tier, state size, recent trend and peak, residency), per-candidate
\texttt{cand} (provider, market, \$/vCPU-h at current prices, hazard, free
capacity, utilization after placement, boot penalty, egress cost of the
move), and fleet context \texttt{ctx} (aggregate demand, time of day,
pending count, per-provider price minima---which also encode provider
availability during outages). Policies see engineered features, never raw
APIs; this keeps programs short, auditable, and portable between the
simulator and a production executor.

\subsection{LLM-driven evolution}\label{sec:evolution}

\begin{algorithm}[t]
\caption{HELIOS policy evolution (as executed; archive in artifact)}
\label{alg:evolve}
\begin{algorithmic}[1]
\State $\mathcal{A} \gets \{\text{seed heuristic}\}$; \quad
$\mathrm{prompt}_0 \gets$ interface spec + catalog facts + guardrails
\For{generation $g = 0, 1, 2, \dots$}
  \State LLM reads $\mathrm{prompt}_g$ (and for $g{>}0$: full fitness
         table and metric breakdowns of $\mathcal{A}$)
  \State LLM writes $3$--$4$ candidate programs (design hypotheses,
         mutations, crossovers), each with a stated rationale
  \For{each candidate $\pi$}
     \State evaluate $J(\pi)$: 2-day train trace, $N{=}200$,
            seeds $\{0,1\}$; log all metrics
  \EndFor
  \State $\mathcal{A} \gets \mathcal{A} \cup$ candidates;\;
         rank by mean $J$
  \If{no improvement for 2 consecutive generations} \textbf{stop} \EndIf
\EndFor
\State \Return champion $\arg\min_{\pi \in \mathcal{A}} J(\pi)$
\end{algorithmic}
\end{algorithm}

Algorithm~\ref{alg:evolve} is deliberately simple---the interesting
machinery is in what the LLM receives and what it is allowed to produce.
The generation-0 prompt (Appendix~\ref{app:prompt}) contains the interface,
the guardrail contract, and a factual digest of the real catalog (price
minima, discount structure, hazard-to-lifetime mapping, egress rates); it
contains \emph{no} fitness numbers, so generation 0 is pure prior-driven
design. From generation 1 on, the LLM additionally receives the measured
fitness table with cost decompositions, violation rates, migration and
interruption counts for every prior candidate---and nothing else: the LLM
never sees or invents metric values, and every number in its archived
analyses (Appendix references; \texttt{synthesis/} in the artifact) is
traceable to the fitness log. Training uses 2 PlanetLab days and seeds
$\{0,1\}$; all other days, seeds, and the Azure and Bitbrains datasets are
held out.

In this artifact run the policy-writing LLM is Claude (Sonnet-class)
operating in-session; each generation's exact input context and full
response are archived verbatim. We report the evolution trajectory as a
measured result in Section~\ref{sec:results-evo}.

\subsection{Guardrail layer and safety contract}\label{sec:guard}

The guardrail layer intercepts every proposal from the policy program and
enforces:

\begin{description}
\item[G1 (reservation-feasible packing).] A service may be placed on an
instance only if the instance's \emph{reservation} headroom covers
$\max(\text{demand}, R_s)$ in both dimensions; in-flight (migrating or
booting) services hold their reservations at the target. The candidate
generator applies the policy's requested headroom first and relaxes to the
hard reservation check only if no candidate survives (projection, G5).
\item[G2 (tier isolation).] Premium services are never offered spot
candidates: the market filter runs \emph{before} scoring, so no score can
override it.
\item[G3 (churn budget).] At most $\lceil \kappa N \rceil$ migrations
execute per step ($\kappa = 0.05$), highest urgency first.
\item[G4 (residency and cooldown).] A service must be resident $\geq 6$
steps before proposing a move and cools down 6 steps after one.
\item[G5 (projection).] Proposals that violate G1--G4, reference dropped
providers, or exceed the per-step instance-creation cap are dropped or
projected to the nearest feasible action; policy exceptions are contained
and treated as ``no proposal.''
\end{description}

\begin{lemma}[Safety contract]\label{lem:guard}
For every policy program $\pi$ (including adversarial or crashing ones)
and every step $t$: (i) no premium service is ever assigned to spot
capacity; (ii) at most $\lceil \kappa N \rceil$ migrations execute at $t$;
(iii) every assignment performed at $t$ satisfies reservation feasibility
at assignment time; and (iv) the control loop completes step $t$.
\end{lemma}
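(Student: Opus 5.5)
The proof is by construction. The key observation is that the policy program $\pi$ is invoked only through its four interface functions, \texttt{knobs}, \texttt{priority}, \texttt{score}, and \texttt{migrate\_urgency}. Their return values reach the executor only as \emph{ranking or proposal data}, and that data is consumed by code the guardrail layer owns. I will model one control step as a fixed pipeline. First, the context compiler builds the feature dictionaries. Second, a guarded call obtains the knobs and the priority order. Third, for each pending service, the candidate generator enumerates instances and applies filters. Fourth, $\pi$'s score selects among the surviving candidates. Fifth, migration proposals are collected, filtered by G4, sorted by urgency and truncated by G3. Finally, the executor commits the actions. Each clause of Lemma~\ref{lem:guard} then reduces to showing that some filter sits strictly upstream of every write to the assignment state, and that no path from $\pi$'s output bypasses it.

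\textbf{Clauses (i)--(iii).} I take them in order.

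For (i), every assignment a premium service receives, whether at initial placement, after a migration, or on restart after an interruption, is drawn from a candidate set that G2's market filter has already stripped of spot offers. Because \texttt{score} only ranks members of that set, the chosen target is on-demand whatever scores are returned, including NaN or infinite ones. I will require the \emph{argmax} over an empty set to yield ``no placement'' (the service stays pending), not a fallback.

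For (ii), executed migrations are exactly a prefix, of length at most $\lceil\kappa N\rceil$, of the sorted proposal list. Any migration that G5 projects to a different target still consumes one slot of that prefix, so the bound holds.

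For (iii), the executor checks reservation headroom $\max(\text{demand},R_s)$ against capacity minus the reservations already held. That held amount includes in-flight services, and it is recomputed immediately before each commit within the step. So sequential commits inside a step cannot jointly overcommit, even though each candidate was generated against an earlier snapshot. The G1 relaxation only drops the \emph{policy's} extra headroom and never the hard check, so it preserves the property.

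\textbf{Clause (iv), the main obstacle.} Exceptions are easy: every call into $\pi$ is wrapped, and an exception becomes ``no proposal'' or a default (headroom $0$, neutral priority, score $-\infty$) by G5. Non-termination, however, cannot be excluded by exception handling alone, since Python code can loop forever. I would close this by making explicit an assumption the construction implicitly relies on: each policy call runs under a wall-clock or instruction budget, and a timeout is treated like an exception. I would also check that the guardrail's own loops are bounded. There are at most $N$ pending services, a finite catalog and instance set, finitely many proposals, and the per-step instance-creation cap limits new instances. Given that, step $t$ completes in time bounded independently of $\pi$. Because the filters and the executor's feasibility check never depend on $\pi$'s values, adversarial outputs also cannot corrupt state in a way that breaks later steps, and induction on $t$ gives the clauses for all steps.
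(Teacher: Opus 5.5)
Your argument is correct and is essentially the paper's own by-construction proof, clause for clause: G2 filters spot offers before \texttt{score} is consulted, the executor (not $\pi$) truncates the urgency-sorted migration list at the budget, feasibility is checked against a capacity view in which in-flight services hold their reservations, and every call into $\pi$ is exception-guarded. Your additions tighten the argument without changing its route, and the per-call time budget in (iv) is a genuine one: the paper's sketch relies on exception guarding alone, which cannot stop a non-terminating $\pi$, so that assumption is actually needed for the lemma's quantifier over adversarial programs, as is the purity premise (which you share with the paper) that $\pi$ influences the executor only through its return values.
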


\begin{IEEEproof}[Proof sketch]
(i) G2 removes spot candidates for premium services from the candidate set
before $\pi$'s \texttt{score} is consulted; $\pi$ can only order the
pre-filtered set. (ii) Migrations execute from a list sorted by urgency and
truncated at the budget; the counter is maintained by the executor, not
$\pi$. (iii) Both passes of the candidate generator check reservation
feasibility against a capacity view in which placed-but-in-flight services
hold their reservations; a candidate failing the check is never surfaced.
(iv) Each call into $\pi$ is exception-guarded; a raised exception yields
an empty proposal set, and the remainder of the step executes mechanisms
only.
\end{IEEEproof}

The contract's value is empirical as well as formal: it converts
catastrophic policy bugs into finite, measurable fitness penalties
(Section~\ref{sec:results-evo}), which is precisely what allows an LLM to
search this space safely, and it bounds behavior under distribution shift
(Section~\ref{sec:results-guard}).

\subsection{The evolved champion}\label{sec:champion}

The final champion (lineage: gen-0 \emph{segregator} $\to$ gen-1
\emph{bigbox} $\to$ gen-2 \emph{amortized}; full code in
Appendix~\ref{app:champion}) is compact enough to audit in one sitting.
Premium services are placed first (priority $10\pi_s + c_s$) and anchored
to the globally cheapest on-demand pool
($\text{score} = -p - 0.5\,|p - p^{od}_{\min}|$). Standard services price
candidates by an \emph{amortized interruption cost},
$p + A \cdot h_k$ with $A{=}10$, derived in generation 1--2 by correcting
a per-service/per-instance accounting error the LLM itself made and then
diagnosed from fitness data. New standard capacity breaks near-price ties
toward larger instances ($+0.0008 \cdot \min(\text{free vCPU}, 12)$),
which the log shows cut interruption events from 18 to 8 per training
episode by shrinking the exposed instance count. Small friction terms
penalize boot and egress, and a single migration rule drifts services off
spot capacity whose amortized price exceeds the cheapest on-demand---the
rule that lets the same constants respond correctly when hazards shift
(Section~\ref{sec:results-sens}).

\section{Implementation}\label{sec:impl}

\paragraph{Simulator.} A 1{,}100-line trace-driven simulator advances the
fleet in 5-minute steps: spot interruptions sample per-instance Bernoulli
hazards; interrupted and migrated services incur the downtime model of
Section~\ref{sec:problem}; instances idle for 2 steps are retired; billing
accrues per step at catalog prices. The candidate generator, guardrails,
and metrics live in the simulator core; policies are plug-in classes.
One 3-day, $N{=}200$ episode executes in $\approx$3\,s on one CPU core,
which is what makes evolution-by-measurement affordable.

\paragraph{Data.} All workload and price data are real and retrieved on
2026-07-03 (full provenance file in the artifact):
PlanetLab CPU traces (10 days, 2011) as packaged with
CloudSim~\cite{calheiros2011cloudsim,park2006comon,beloglazov2012optimal};
the Bitbrains fastStorage traces (CPU and memory,
GWA-T-12~\cite{shen2015bitbrains,gwarchive}); a 1{,}177-VM subset of the
Azure Public Dataset V2~\cite{cortez2017resourcecentral,azuretrace};
cross-provider on-demand and spot prices from the SkyPilot service
catalog~\cite{skypilotcatalog}; and AWS per-type interruption-frequency
buckets from the Spot Instance Advisor~\cite{awsspotadvisor}. Services are
sampled from trace VMs with seeded RNG; PlanetLab and Azure traces are
CPU-only, so memory demand there is synthesized at a fixed 2--4 GiB/vCPU
ratio (disclosed limitation); Bitbrains provides real memory and therefore
serves as an out-of-family generalization test.

\paragraph{Baseline implementations.} The MILP planner solves a
pool-planning integer program with CBC (instance counts $n_{k,m}$, fluid
service assignment $y_{s,k,m}$, reservation capacity constraints with 10\%
headroom, spot-risk term; 2\% optimality gap), receives the
\emph{reservation oracle}---exactly the quantities that determine packing
feasibility---and re-plans whenever the price/availability context
changes. Its executor steers toward the plan with a dominant affinity
bonus. Fluid assignments are restricted to per-instance-feasible
(service, type) pairs; without this restriction the executor ping-pongs
services the LP fractionally split across instances. The DQN
baseline~\cite{mnih2015dqn} (numpy implementation: MLP 14-64-64-18, replay,
target network) re-decides every 6 steps over the \emph{same} knob space
evolution searches---spot aversion $A \in \{5, 25, 10^6\}$, headroom
$\in \{0.02, 0.08, 0.2\}$, big-box on/off over the champion skeleton---so
the comparison isolates the search method, not the hypothesis space, and
the action space contains the champion's static configuration as one arm.
It trains for 100 episodes ($9{,}600$ decisions, $2.6\times$ the
38-episode budget evolution consumed) on the same training days.

\section{Experimental setup}\label{sec:setup}

\begin{table}[t]
\centering
\caption{Multi-cloud catalog summary (retrieved 2026-07-03; 8 types per
provider; regions AWS us-east-1, Azure eastus, GCP us-central1). ``Assumed''
lifetimes are disclosed assumptions swept in
Sections~\ref{sec:results-guard}--\ref{sec:results-sens}.}
\label{tab:catalog}
\scriptsize
\setlength{\tabcolsep}{4pt}
\begin{tabular}{lccc}
\toprule
 & AWS & Azure & GCP\\
\midrule
on-demand \$/vCPU-h & 0.0425 & 0.0423 & 0.0486\\
spot discount range & 43--72\% & 89\% & 73--75\%\\
interruption data & per-type & assumed & assumed\\
mean spot lifetime & 1--30 d & 7 d & 5 d\\
egress \$/GB & 0.09 & 0.087 & 0.12\\
\bottomrule
\end{tabular}
\end{table}

\paragraph{Trace suites.} Training used PlanetLab days 20110303+06 with
seeds $\{0,1\}$ only. Evaluation uses (i) three held-out PlanetLab days
(20110309, 20110322, 20110325) as one 3-day episode, (ii) one day of the
Azure trace, (iii) three days of Bitbrains (real memory dimension), all
with unseen seeds $\{2,3,4\}$, $N{=}200$ (57{,}600--172{,}800
service-steps per episode).

\paragraph{Baselines.} \emph{SingleCloud-BFD}: best-fit-decreasing
consolidation on the cheapest single provider, on-demand only (the
``lift-and-shift'' reference). \emph{Threshold-HPA}: utilization-threshold
scaling in the style of production horizontal autoscalers, on-demand.
\emph{Greedy-MultiCloud}: cheapest risk-adjusted price
$p + 50 h_k$ across all providers and markets---a strong calibrated
heuristic. \emph{Seed-RiskAware}: the hand-written generation-0 seed
(risk-adjusted price, mild consolidation, egress-aware migration).
\emph{MILP-Plan} and \emph{DQN-Meta} as above. All baselines run inside
the same guardrails.

\paragraph{Metrics.} Total cost with od/spot/egress decomposition,
overall and premium violation rates, migrations, interruptions, and
$J$~(Eq.~\ref{eq:J}). In tables, $J$ is recomputed from per-tier violation
rates assuming the nominal 30\% premium share; the reconstruction error
versus exact per-seed premium counts is below 1\%.

\section{Results}\label{sec:results}

\subsection{Evolution dynamics}\label{sec:results-evo}

\begin{figure}[t]
\centering
\includegraphics[width=\columnwidth]{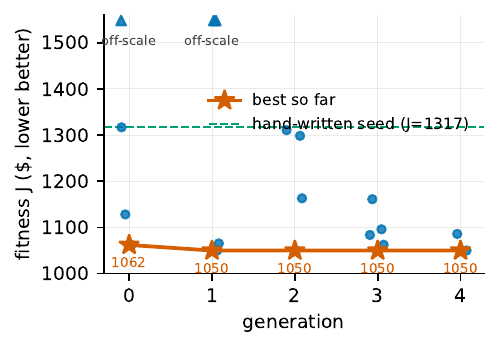}
\caption{Evolution on the training fitness (2 PlanetLab days, seeds
\{0,1\}). Dots are candidate programs; triangles mark candidates off the
scale (up to $J{=}1.4\times 10^5$); the line tracks the best-so-far.
Fitness improves 20.9\% over the hand-written seed and plateaus after
generation 1.}
\label{fig:evolution}
\end{figure}

Figure~\ref{fig:evolution} plots all 19 candidates
(Appendix~\ref{app:evtable} lists every fitness). Three findings shape the
rest of the paper.

\emph{(1) The gains are structural, not parametric.} The two moves that
matter---segregating tiers with premium-first placement ordering
(generation 0, worth more than \$77 of capacity purely through 5$\times$
-weighted premium cold-start ordering) and big-box spawning (generation 1,
interruptions 18$\to$8 by shrinking exposed instance count)---are design
decisions. Generations 3--4 tested seven parameter-level refinements of the
champion; every one measurably regressed (densifying packing or reordering
placement leaked 5$\times$-weighted premium violation steps), and two bolder
gene proposals were decision-identical to the champion on the training
trace. Evolution stopped by the two-generation patience rule.

\emph{(2) Negative results are selected against, cheaply.} The
actual-utilization drain gene---consolidate by migrating services off
lightly-loaded hosts, a standard idea in single-DC
consolidation~\cite{beloglazov2012optimal}---failed catastrophically twice
(generation 0: $J{=}142{,}137$, 2{,}052 migrations, 61\% violations;
a ``repaired'' generation-1 variant: 775 migrations, 29\%). The mechanism
is instructive: under p99 reservations, \emph{actual} utilization sits near
0.15 fleet-wide, so the trigger fires everywhere and drained services land
on hosts about to look drainable themselves. The guardrails (churn budget,
residency) kept these episodes finite and measurable rather than
divergent---exactly the property Lemma~\ref{lem:guard} promises---and the
fitness signal killed the gene family. Similarly, removing the premium
min-price anchor looked like a simplification and cost $+\$221$ of
on-demand spend: the big-box tiebreak coefficient exceeded the on-demand
\$/vCPU spread and spawned fragmenting 16-vCPU boxes. The transferable
rule the LLM articulated from the data---\emph{tiebreak coefficients must
sit below the price spread of the pool they act on}---is the kind of
design knowledge this loop makes explicit and archives.

\emph{(3) The LLM's own arithmetic errors are recoverable.} The
generation-1 amortized-risk candidate over-penalized hazards by
$\sim$20$\times$ through a per-service versus per-instance accounting
error, driving the fleet all-on-demand (\$984). The generation-2 analysis
diagnosed the error from the cost decomposition and corrected the constant
($A{=}216 \to 10$); the corrected form is decision-identical to the
hard-ceiling champion on calibrated hazards but degrades gracefully when
hazards shift, which is why it carries the final policy.

Under the corrected simulator accounting (an in-flight reservation bug we
found and fixed during evaluation), we re-ran all 5 generations: every
per-generation ranking is unchanged and the champion's fitness trajectory
is identical (the champion never migrates in training), so champion
selection is robust to the fix; both logs ship in the artifact.

\subsection{Main comparison}\label{sec:results-main}

\begin{table*}[t]
\centering
\caption{Main results (mean $\pm$ std over unseen seeds 2--4; all
policies inside identical guardrails). Viol.\ columns are the percentage
of violated service-steps overall and for the premium tier. $J$ per
Eq.~\ref{eq:J}; best per suite in bold.}
\label{tab:main}
\scriptsize
\setlength{\tabcolsep}{3.6pt}
\begin{tabular}{lcccccc}
\toprule
policy & cost (\$) & intr. & migr. & viol.\ \% & prem.\ \% & $J$ (\$)\\
\midrule
\multicolumn{7}{l}{\emph{PlanetLab held-out (3 days)}}\\
SingleCloud-BFD & {\$1552} $\pm$ 133 & 0 & 0 & 0.249 & 0.237 & {2320}\\
Threshold-HPA & {\$1520} $\pm$ 78 & 0 & 14 & 0.260 & 0.260 & {2351}\\
Greedy-MultiCloud & {\$665} $\pm$ 49 & 12 & 0 & 0.257 & 0.228 & {1419}\\
Seed-RiskAware & {\$823} $\pm$ 64 & 7 & 13 & 0.250 & 0.235 & {1586}\\
MILP-Plan & {\$1207} $\pm$ 203 & 13 & 0 & 0.216 & 0.133 & {1704}\\
DQN-Meta & {\$1575} $\pm$ 164 & 0 & 0 & 0.297 & 0.212 & {2326}\\
\textsc{Helios} & \textbf{\$723} $\pm$ 26 & 11 & 0 & 0.215 & 0.161 & \textbf{1283}\\
\addlinespace
\multicolumn{7}{l}{\emph{Azure 2019 (1 day)}}\\
SingleCloud-BFD & {\$431} $\pm$ 17 & 0 & 0 & 0.549 & 0.531 & {1002}\\
Threshold-HPA & {\$429} $\pm$ 32 & 0 & 6 & 0.596 & 0.571 & {1044}\\
Greedy-MultiCloud & {\$216} $\pm$ 15 & 3 & 0 & 0.581 & 0.607 & {856}\\
Seed-RiskAware & {\$250} $\pm$ 10 & 4 & 3 & 0.609 & 0.549 & {852}\\
MILP-Plan & {\$304} $\pm$ 68 & 3 & 0 & 0.616 & 0.381 & {778}\\
DQN-Meta & {\$481} $\pm$ 34 & 0 & 0 & 0.771 & 0.540 & {1123}\\
\textsc{Helios} & \textbf{\$231} $\pm$ 8 & 3 & 0 & 0.649 & 0.457 & \textbf{773}\\
\addlinespace
\multicolumn{7}{l}{\emph{Bitbrains fastStorage (3 days)}}\\
SingleCloud-BFD & {\$1246} $\pm$ 81 & 0 & 0 & 0.216 & 0.231 & {1971}\\
Threshold-HPA & {\$1272} $\pm$ 75 & 0 & 14 & 0.211 & 0.215 & {1954}\\
Greedy-MultiCloud & {\$569} $\pm$ 58 & 10 & 0 & 0.244 & 0.242 & {1345}\\
Seed-RiskAware & {\$662} $\pm$ 71 & 6 & 18 & 0.223 & 0.251 & {1440}\\
MILP-Plan & {\$1737} $\pm$ 183 & 13 & 0 & 0.197 & 0.116 & {2177}\\
DQN-Meta & {\$1120} $\pm$ 61 & 0 & 0 & 0.196 & 0.176 & {1701}\\
\textsc{Helios} & \textbf{\$775} $\pm$ 62 & 11 & 0 & 0.203 & 0.136 & \textbf{1268}\\
\bottomrule

\end{tabular}
\end{table*}

\begin{figure*}[t]
\centering
\includegraphics[width=\textwidth]{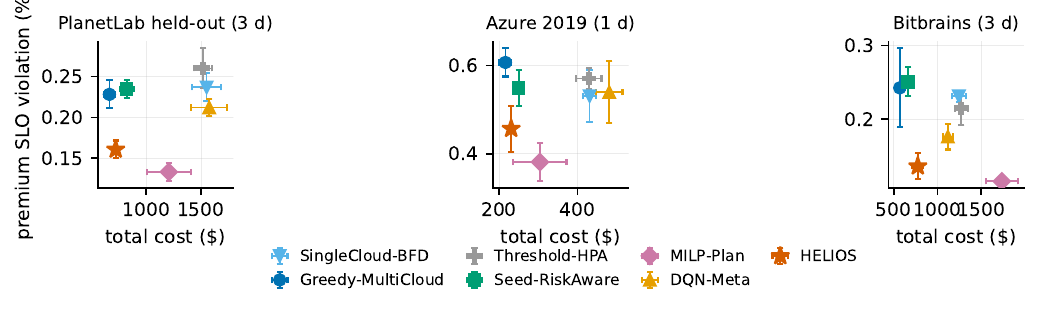}
\caption{Cost versus premium SLO violation on the three held-out trace
suites (mean $\pm$ std over seeds). \textsc{Helios} (star) sits on or near
the Pareto frontier of every suite.}
\label{fig:pareto}
\end{figure*}

Table~\ref{tab:main} and Figure~\ref{fig:pareto} give the headline result:
\textsc{Helios} attains the lowest penalized cost on \emph{all three}
held-out suites---$-45$\% versus SingleCloud-BFD and $-9.6$\% versus the
calibrated Greedy on PlanetLab; $-$9.6\% versus Greedy on Azure; $-5.7$\%
on Bitbrains---and does so from different directions. Against the
single-cloud and autoscaler baselines the win is raw cost (spot capacity
plus provider choice, cost $-53$\% versus BFD). Against Greedy the win is
SLO: Greedy is 8\% cheaper on PlanetLab but incurs 42\% more premium
violation steps (0.228\% vs 0.161\%), which the tiered objective correctly
prices; at $N{=}400$ \textsc{Helios} is cheaper in raw dollars as well
(\$1{,}273 vs \$1{,}400, Section~\ref{sec:results-sens}) as the big-box
gene compounds. Against the hand-written seed---the same designer's best
manual attempt in the same interface---evolution is worth $-12$\% cost
\emph{and} $-31$\% premium violations simultaneously.

Two baselines deserve their own paragraphs. \emph{MILP-Plan}, despite the
reservation oracle and an exact solver, costs 40--67\% more than
\textsc{Helios} on the CPU-only suites and 124\% more on Bitbrains: integer
rounding across many small per-type pools plus uniform planning headroom
over-provisions, and heterogeneous real memory reservations aggravate the
rounding. It does deliver the best premium SLO on PlanetLab and Bitbrains
(0.133\%/0.116\%)---the classic planning trade of dollars for
predictability---but Section~\ref{sec:results-robust} shows the plan's
brittleness under disturbance. \emph{DQN-Meta} fails outright: it
converges to the maximally risk-averse arm (all on-demand, zero spot),
costing more than BFD because it also over-provisions instances. The
diagnosis from training curves is credit assignment: exploratory arm
switches mid-episode trigger migration waves whose delayed violation
penalties swamp the per-epoch reward signal, teaching the agent that
\emph{any} spot exposure is dangerous. Under a comparable (indeed
$2.6\times$ larger) interaction budget, evolving a static program was both
easier to search and better at the optimum than learning a time-varying
controller over the same knobs---a finding consistent with the
sample-efficiency critiques of DRL-for-systems~\cite{mao2019decima} and,
we believe, of independent interest.

Generalization deserves emphasis: evolution saw only 2 PlanetLab CPU-only
days, yet the ranking holds on a different cloud's workload (Azure) and on
Bitbrains, whose \emph{real memory dimension} the training distribution
did not contain. The champion's genes (tier segregation, exposure
minimization, amortized hazard pricing) are properties of the
\emph{catalog structure}, not the trace, which is the mechanistic reason
they transfer.

\subsection{Ablations}\label{sec:results-abl}

\begin{table*}[t]
\centering
\caption{Component ablations on PlanetLab held-out (seeds 2--4).
$^{\dagger}$At calibrated hazards; see Table~\ref{tab:guard} for why
$-$guardrails is not a viable configuration.}
\label{tab:ablation}
\scriptsize
\begin{tabular}{lccccc}
\toprule
variant & cost & $\Delta$cost & viol.\% & prem.\% & $\Delta J$\\
\midrule
\textsc{Helios} (full) & \$723 & +0\% & 0.215 & 0.161 & +0\%\\
$-$\,evolution (seed heuristic) & \$823 & +14\% & 0.250 & 0.235 & +24\%\\
$-$\,spot markets & \$1732 & +140\% & 0.274 & 0.199 & +90\%\\
$-$\,migration & \$723 & +0\% & 0.215 & 0.161 & +0\%\\
single cloud (Azure) & \$761 & +5\% & 0.213 & 0.156 & +2\%\\
single cloud (AWS) & \$1521 & +111\% & 0.312 & 0.168 & +70\%\\
single cloud (GCP) & \$1934 & +168\% & 0.236 & 0.116 & +88\%\\
$-$\,guardrails$^{\dagger}$ & \$218 & -70\% & 0.206 & 0.182 & -36\%\\
\bottomrule

\end{tabular}
\end{table*}

\begin{figure}[t]
\centering
\includegraphics[width=\columnwidth]{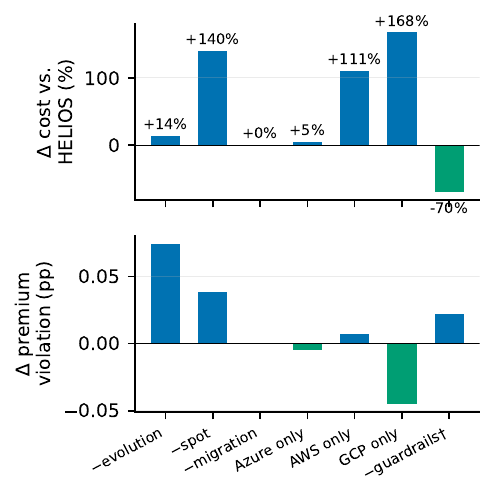}
\caption{Ablation deltas versus full \textsc{Helios} (PlanetLab held-out).}
\label{fig:ablation}
\end{figure}

Table~\ref{tab:ablation} and Figure~\ref{fig:ablation} decompose the win.
Spot capacity is the largest single factor ($-$spot: $+140$\% cost).
Provider choice is the second: the fleet confined to AWS or GCP costs
$+111$\% and $+168$\%, while Azure-only costs just $+5$\%---because
Azure's spot pricing happens to dominate this 2026 snapshot, multi-cloud's
\emph{steady-state} value here is mostly automatic best-provider
selection. Its \emph{contingent} value is what the robustness suite prices
(Section~\ref{sec:results-robust}): the Azure-only variant has no
provider-outage fallback at any cost. Evolution contributes $+14$\% cost /
$+24$\% $J$ when removed. Migration contributes nothing in calm held-out
conditions---the champion executes zero migrations there, and the
$-$migration row is identical---but becomes load-bearing under
disturbances, where the drift-off-overpriced-spot rule and forced
re-placements do the work. We keep the gene on the strength of the
robustness results rather than the calm-weather ablation, and flag this
honestly.

\subsection{The price and value of guardrails}\label{sec:results-guard}

\begin{table}[t]
\centering
\caption{Guardrail value under spot-hazard stress (PlanetLab held-out,
seeds 2--3). Each cell: total cost / premium violation rate.}
\label{tab:guard}
\begin{tabular}{lcccc}
\toprule
 & \multicolumn{2}{c}{\textsc{Helios} (guarded)} &
   \multicolumn{2}{c}{no guardrails}\\
hazard & cost & prem.\ viol. & cost & prem.\ viol.\\
\midrule
$\times$1 & \$731 & 0.17\% & \$219 & 0.18\%\\
$\times$4 & \$691 & 0.17\% & \$146 & 0.16\%\\
$\times$10 & \$1412 & 0.17\% & \$825 & 97.68\%\\
\bottomrule

\end{tabular}
\end{table}

\begin{figure*}[t]
\centering
\includegraphics[width=\textwidth]{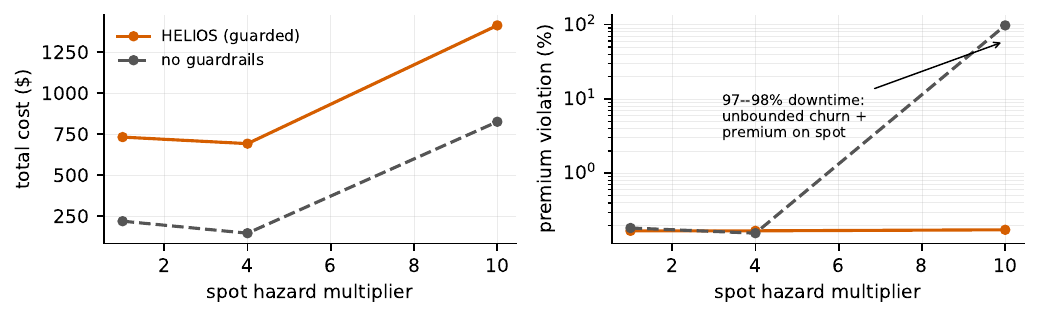}
\caption{Cost (left) and premium violation rate (right, log scale) as the
spot hazard scales. Unguarded operation is cheap until it is
catastrophic.}
\label{fig:guardvalue}
\end{figure*}

Removing the guardrails at calibrated hazards is, superficially,
free money: premium services flee to 89\%-off spot and the bill drops 70\%
with violation rates statistically unchanged (Table~\ref{tab:ablation}),
because at 7-day mean lifetimes the \emph{mean} interruption rate barely
moves the SLO needle. This is precisely why such constraints get argued
away in practice. Table~\ref{tab:guard} and Figure~\ref{fig:guardvalue}
supply the counter-argument quantitatively: scale hazards $10\times$
(sub-day mean lifetimes---the regime of a real capacity crunch, and within
the historical range of AWS's own published
buckets~\cite{awsspotadvisor}) and the unguarded system collapses to
97--98\% premium downtime through a compounding failure: premium services
sit on evaporating spot capacity while the policy's own migration rule,
now unthrottled (no churn budget, no residency), thrashes the fleet.
Guarded \textsc{Helios} in the same regime holds premium violations at
0.17--0.18\% and pays a graceful $+100$\% cost as the amortized price
correctly pushes standard services back to on-demand. Guardrails are not
overhead; they are the difference between a risk premium and a tail event.
An incidental finding worth reporting: at intermediate hazards
($\times 4$) cost \emph{decreases} for both configurations---interruption
churn acts as free defragmentation, forcing periodic re-packing into
existing capacity.

\subsection{Sensitivity}\label{sec:results-sens}

\begin{figure*}[t]
\centering
\includegraphics[width=\textwidth]{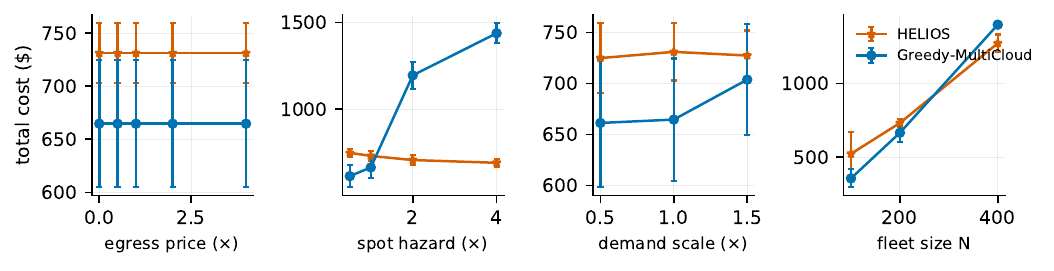}
\caption{Cost sensitivity (PlanetLab held-out, seeds 2--3):
egress price, spot hazard, demand scale, and fleet size.}
\label{fig:sensitivity}
\end{figure*}

Figure~\ref{fig:sensitivity} sweeps four environment parameters against
the strongest cost competitor. Both policies are egress-insensitive in
steady state (neither migrates across providers absent disturbances).
The hazard sweep separates them sharply: Greedy's fixed risk weight
($50 h$) abandons spot near $\times 2$ and its cost more than doubles by
$\times 4$ (\$1{,}445 vs \textsc{Helios} \$691), while the champion's
amortized constant ($10 h$), derived from the actual downtime penalty an
interruption imposes, keeps cheap spot correctly priced until far larger
shifts. Demand scaling is benign for both. Fleet-size scaling crosses
over: at $N{=}100$ Greedy's simplicity wins, by $N{=}400$
\textsc{Helios}'s exposure-minimizing packing is 9\% cheaper in raw
dollars---the regime where orchestration sophistication pays.

\subsection{Robustness: price shocks and provider outages}
\label{sec:results-robust}

\begin{table}[t]
\centering
\caption{Robustness scenarios (PlanetLab held-out, seeds 2--3), disturbance
at episode midpoint.}
\label{tab:robust}
\begin{tabular}{lcccc}
\toprule
policy & cost & viol.\% & prem.\% & migr.\\
\midrule
\multicolumn{5}{l}{\emph{calm (no disturbance)}}\\
\textsc{Helios} & \$731 & 0.211 & 0.168 & 0\\
Greedy-MultiCloud & \$665 & 0.249 & 0.234 & 0\\
MILP-Plan & \$1084 & 0.206 & 0.137 & 0\\
\addlinespace
\multicolumn{5}{l}{\emph{Azure on-demand price $\times$1.3 at $t{=}1.5$\,d}}\\
\textsc{Helios} & \$805 & 0.211 & 0.168 & 0\\
Greedy-MultiCloud & \$729 & 0.249 & 0.234 & 0\\
MILP-Plan & \$819 & 0.309 & 0.336 & 95\\
\addlinespace
\multicolumn{5}{l}{\emph{Azure zone outage at $t{=}1.5$\,d}}\\
\textsc{Helios} & \$883 & 0.408 & 0.252 & 0\\
Greedy-MultiCloud & \$929 & 0.341 & 0.374 & 0\\
MILP-Plan & \$1214 & 0.372 & 0.253 & 22\\
\bottomrule

\end{tabular}
\end{table}

\begin{figure}[t]
\centering
\includegraphics[width=\columnwidth]{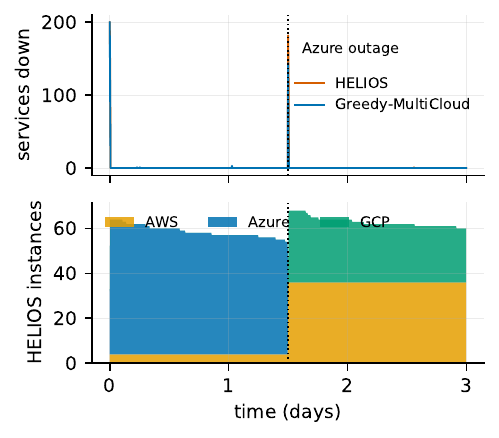}
\caption{Azure zone outage case study (seed 2). Top: services down over
time---the outage spike resolves within a few control periods. Bottom:
\textsc{Helios}'s instance fleet by provider; the Azure share is rebuilt
on GCP and AWS within the same period.}
\label{fig:outage}
\end{figure}

Table~\ref{tab:robust} and Figure~\ref{fig:outage} evaluate the scenarios
multi-cloud exists for. Under a mid-episode Azure zone outage that kills
the majority of the fleet's instances, \textsc{Helios} re-places the
displaced services onto GCP and AWS within a few control periods
(Figure~\ref{fig:outage}) and holds premium violations to 0.25\%---34\%
fewer than Greedy (0.37\%), whose refugees re-place but whose lack of
premium-first ordering and anchoring costs it on the 5$\times$-weighted
tier, at 5\% higher cost. MILP-Plan, given availability-aware re-planning,
also recovers its SLO (0.25\%) but pays 37\% more than \textsc{Helios}
for the recovery: the fresh global plan triggers a coordinated migration
wave toward newly-optimal pools, each move costing downtime, where
\textsc{Helios}'s incremental policy simply re-places refugees and leaves
settled services alone. Under a 30\% Azure on-demand price shock,
\textsc{Helios}'s premium violations are \emph{unchanged from calm
conditions} (its anchor migrates premium services off the shocked pool at
bounded churn), while MILP-Plan's re-plan again converts a price change
into a violation spike (0.34\% premium, $2.6\times$ its calm rate). The
architectural lesson generalizes: under disturbance, incremental
per-service adaptation dominates global re-planning, because a fresh
optimum far from the current assignment is expensive to reach through
migration downtime.

\subsection{Decision overhead}\label{sec:results-overhead}

\begin{table}[t]
\centering
\caption{\textsc{Helios} decision overhead (single CPU core).}
\label{tab:overhead}
\begin{tabular}{ccc}
\toprule
$N$ & decision ms / step & episode wall-clock (s, 3 days)\\
\midrule
100 & 0.98 & 1.2\\
200 & 1.93 & 2.2\\
400 & 4.17 & 4.7\\
\bottomrule

\end{tabular}
\end{table}

Table~\ref{tab:overhead}: the deployed artifact costs $\sim$10\,\textmu s
per service per step and scales linearly---six orders of magnitude below
the per-decision LLM architecture of Section~\ref{sec:bottleneck}, and
comfortably inside a 5-minute (or 5-second) control period.

\section{Discussion and limitations}\label{sec:limits}

\paragraph{Simulation, not production.} All results are trace-driven
simulation. The simulator's downtime, boot, and egress models are simple
constants; real migrations have workload-dependent brownouts, and real
boot times vary by image and provider. We mitigated the classic
sim-optimism failure modes we encountered---and disclose them: an early
accounting bug let concurrently migrating services stack onto one target
(fixed; every experiment re-run; evolution rankings unchanged), and
reservation-based packing was introduced after observing that
demand-based packing produced unrealistic 30\% violation regimes.

\paragraph{Price and hazard snapshots.} The catalog is a single 2026-07-03
snapshot. Azure's 89\% spot discount dominates this snapshot and shrinks
the \emph{steady-state} multi-cloud gap (Azure-only: $+5$\%); under other
snapshots the provider mix, and possibly the champion's constants, would
differ. Azure and GCP mean spot lifetimes are assumptions (7\,d/5\,d),
not measurements; we swept them $10\times$ and the qualitative
conclusions (guardrail value, amortized-price adaptivity) are exactly the
ones that survive the sweep. Intra-episode spot \emph{price} dynamics are
not modeled (post-2017 AWS spot prices move
slowly~\cite{benyehuda2013spot}, but capacity-driven eviction waves do
not; our hazard sweep is a coarse proxy).

\paragraph{Workload modeling.} PlanetLab and Azure traces are CPU-only;
their memory demand is synthesized. The premium tier is assigned randomly
at 30\%, and violation costs (\$0.5/\$5 per 5-minute service-step) are a
stylized SLO-credit model; different weightings would move the
cost--SLO trade-off along Figure~\ref{fig:pareto}'s frontier rather than
reorder its points, but we have not verified this beyond the two weights
reported.

\paragraph{The LLM in the loop.} The policy-writing LLM ran in-session
under authorial supervision rather than as an unattended API pipeline; the
archive (prompts, responses, code, fitness) makes the process auditable
and the harness is fully automated, but replication with other models is
future work, as is scaling the search (our 19-program budget is tiny next
to FunSearch-scale sampling~\cite{romeraparedes2024funsearch}; the fitness
plateau at generation 2 may reflect the budget as much as the landscape).
Evolved constants are snapshot-fitted; continual re-evolution against
rolling price feeds is the natural deployment mode and its stability is
untested.

\paragraph{Scope.} We orchestrate stateless-restartable long-running
services at the placement/migration layer; we do not model network
latency between clouds, data-gravity constraints, GPU capacity, or
multi-region placement within a provider, all of which the interface can
express as features but the present evaluation does not exercise.

\section{Conclusion}\label{sec:conclusion}

HELIOS shows that the right place for an LLM in high-frequency cloud
resource control is off the critical path: as an offline synthesizer of
small, auditable policy programs, selected by measured fitness in a
trace-driven simulator built from real workloads and real prices, and
deployed inside a guardrail layer whose safety contract holds regardless
of what the generated code does. The evolved policy beats every baseline
we could calibrate---including an oracle-informed MILP and a DQN with a
larger interaction budget---on penalized cost across three real trace
suites, transfers to a workload dimension it never trained on, adapts
correctly to hazard and price disturbances, and costs microseconds per
decision. The evolution archive itself is a contribution: it records not
just the champion but the measured failure of plausible designs, turning
policy engineering folklore into data.

\section*{Reproducibility}
The complete artifact---the simulator, all evolved policies, the LLM
prompt/response archive, the experiment logs, and the table-generation
scripts that produce every number in this paper---accompanies the
manuscript.

\bibliographystyle{ieeetr}
\bibliography{refs}

\appendices

\section{Generation-0 synthesis prompt (abridged)}\label{app:prompt}
{\scriptsize The complete prompt and all responses are archived under
\texttt{synthesis/} in the artifact. The prompt comprises: the fitness
definition of Eq.~\ref{eq:J} and training configuration; the policy API of
Section~\ref{sec:interface} with all feature-dictionary fields; the
environment fact digest (provider price minima, spot discount structure,
hazard-to-lifetime mapping, egress rates, interruption/migration downtime
constants); the guardrail contract G1--G5; and the deliverable
specification (``3 diverse policy programs\ldots prioritize different
design hypotheses, not parameter variants''). From generation 1 the prompt
additionally contains the full measured fitness table of all prior
candidates with cost decompositions, violation rates, and
migration/interruption counts.}

\section{Champion policy (complete)}\label{app:champion}
\begin{lstlisting}
"""HELIOS evolved policy (champion of 5-generation LLM-driven synthesis;
lineage: gen0_d_segregator -> gen1_d_bigbox -> gen2_a_amortized_bigbox).

Genes and their measured provenance (see synthesis/ and results/evolution_log.json):
  - premium-first placement ordering        [gen0: -$77-equivalent premium downtime]
  - premium anchored to cheapest on-demand  [gen2: removing it cost +$221]
  - amortized interruption price A=10       [gen1->2: corrected per-instance accounting]
  - bigbox spawn tiebreak (standard pool)   [gen1: interruptions 18 -> 8]
  - fill-existing bonus, boot & egress friction terms [gen0 seed]
"""

class Helios:
    name = "HELIOS"
    A = 10.0

    def knobs(self, ctx):
        return {"headroom": 0.08}

    def priority(self, sv, ctx):
        return 10.0 * sv["premium"] + sv["cpu"]

    def migrate_urgency(self, sv, host, ctx):
        if host["market"] == 1 and \
           host["price_vcpu"] + self.A * host["hazard"] > ctx["min_od_vcpu"]:
            return 0.5
        return 0.0

    def score(self, sv, cand, ctx):
        if sv["premium"] > 0:
            sc = -cand["price_vcpu"] - 0.5 * abs(cand["price_vcpu"] - ctx["min_od_vcpu"])
            if cand["new"] == 0: sc += 0.01
        else:
            sc = -(cand["price_vcpu"] + self.A * cand["hazard"])
            if cand["new"] == 0:
                sc += 0.008
            else:
                sc += 0.0008 * min(cand["free_cpu"], 12.0)
        sc -= 0.004 * cand["boot"]
        sc -= 0.008 * cand["egress"]
        return sc

POLICY = Helios
\end{lstlisting}

\section{Per-generation fitness log}\label{app:evtable}
\begin{table*}[h]
\centering
\caption{All 19 evolved candidates (training fitness; corrected-accounting
re-run; both logs in artifact).}
\scriptsize
\begin{tabular}{llccccc}
\toprule
gen & candidate & $J$ & cost & viol.\% & migr. & intr.\\
\midrule
0 & gen0\_d\_segregator & 1062 & \$468 & 0.40\% & 0 & 18\\
0 & gen0\_b\_spotceiling & 1128 & \$390 & 0.38\% & 0 & 12\\
0 & gen0\_a\_seed & 1317 & \$534 & 0.39\% & 8 & 8\\
0 & gen0\_c\_consolidator & 23167 & \$487 & 7.47\% & 4142 & 5\\
\addlinespace
1 & gen1\_d\_bigbox & 1050 & \$500 & 0.32\% & 0 & 8\\
1 & gen1\_a\_premfirst\_ceiling & 1065 & \$468 & 0.40\% & 0 & 18\\
1 & gen1\_c\_amortized\_risk & 1729 & \$984 & 0.43\% & 0 & 0\\
1 & gen1\_b\_safe\_drain & 2260 & \$454 & 2.50\% & 1231 & 8\\
\addlinespace
2 & gen2\_a\_amortized\_bigbox & 1050 & \$500 & 0.32\% & 0 & 8\\
2 & gen2\_c\_refuge & 1163 & \$486 & 0.36\% & 0 & 8\\
2 & gen2\_d\_full & 1298 & \$725 & 0.34\% & 0 & 8\\
2 & gen2\_b\_prem\_pack & 1311 & \$725 & 0.34\% & 0 & 8\\
\addlinespace
3 & gen3\_b\_spot\_fill & 1062 & \$500 & 0.34\% & 0 & 8\\
3 & gen3\_c\_res\_priority & 1084 & \$500 & 0.32\% & 0 & 8\\
3 & gen3\_a\_prem\_fill & 1096 & \$494 & 0.34\% & 0 & 8\\
3 & gen3\_d\_lean\_headroom & 1161 & \$490 & 0.35\% & 0 & 8\\
\addlinespace
4 & gen4\_b\_conservative\_A25 & 1050 & \$500 & 0.32\% & 0 & 8\\
4 & gen4\_c\_driftback & 1050 & \$500 & 0.32\% & 0 & 8\\
4 & gen4\_a\_bc\_cross & 1086 & \$500 & 0.33\% & 0 & 8\\
\bottomrule

\end{tabular}
\end{table*}

\end{document}